\let\old@ps@headings\ps@headings
\let\old@ps@IEEEtitlepagestyle\ps@IEEEtitlepagestyle
\def\psccfooter#1{%
    \def\ps@headings{%
        \old@ps@headings%
        \def\@oddfoot{\strut\hfill#1\hfill\strut}%
        \def\@evenfoot{\strut\hfill#1\hfill\strut}%
    }%
    \def\ps@IEEEtitlepagestyle{%
        \old@ps@IEEEtitlepagestyle%
        \def\@oddfoot{\strut\hfill#1\hfill\strut}%
        \def\@evenfoot{\strut\hfill#1\hfill\strut}%
    }%
    \ps@headings%
}
        \parbox{\textwidth}{\hrulefill \\ \small{24th Power Systems Computation Conference} \hfill \begin{minipage}{0.2\textwidth}\centering \vspace*{4pt} \includegraphics[scale=0.06]{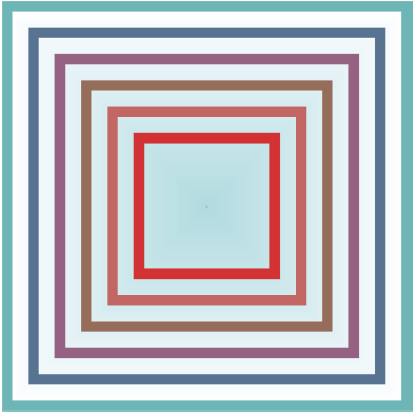}\\\small{PSCC 2026} \end{minipage} \hfill \small{Limassol, Cyprus --- June 8-12, 2026}}%
\def\BibTeX{{\rm B\kern-.05em{\sc i\kern-.025em b}\kern-.08em
    T\kern-.1667em\lower.7ex\hbox{E}\kern-.125emX}}
\newtheoremstyle{sltheorem}
{}                
{}                
{}        
{10pt}                
{\bfseries}       
{:}               
{ }               
{}                
\theoremstyle{sltheorem}
\newtheorem{definition}{Definition}
\newtheorem{proposition}{Proposition}
\begin{document}

\title{Comparative Assessment of Look-Ahead Economic Dispatch and Ramp Products for Grid Flexibility}

\author{
Qian Zhang, Le Xie\\
School of Engineering and Applied Sciences \\
Harvard University, Allston, MA, USA \\ 
qianzhang@g.harvard.edu, xie@seas.harvard.edu
\and
Long Zhao, Congcong Wang \\
Midcontinent ISO\\
Carmel, IN, USA \\
\{lzhao, cwang\}@misoenergy.org
\thanks{This work is supported by....}}

\maketitle

\begin{abstract}
High renewable penetration increases the frequency and magnitude of net-load ramps, stressing real-time flexibility. Two commonly deployed remedies are \emph{look-ahead economic dispatch} (LAED) and \emph{ramp products} (RPs), yet their operational equivalence under the industry-standard rolling-window dispatch implementation is not well understood. This paper develops linear optimization models for multi-interval LAED and RP-based co-optimization, and proves that an \emph{enhanced} RP formulation can match LAED's dispatch feasible region at a single time step when additional intertemporal deliverability constraints are enforced. We then show that this equivalence does not generally persist under rolling-window operation because LAED and RP formulations optimize different intertemporal objectives, leading to divergent end-of-window states. Using different test systems under stressed ramping conditions and multiple load levels, we show LAED achieves similar or lower load shedding than RP implementations with the same look-ahead horizon, with the most pronounced differences under high-load, ramp-limited conditions. The study highlights the limitations of current ramp product implementations and suggests enhancements, such as introducing more mid-duration RPs.
\end{abstract}

\begin{IEEEkeywords}
Look-ahead economic dispatch, ramp product, power system operation, feasible region
\end{IEEEkeywords}

\thanksto{\noindent Submitted to the 24th Power Systems Computation Conference (PSCC 2026).}

\section{Introduction}

The power grid is undergoing a profound transformation driven by rising renewable energy penetration and shifting demand patterns. While these changes bring substantial benefits, they also increase the frequency and magnitude of net-load ramps, creating operational challenges for maintaining security, particularly in managing net demand \emph{variation} and \emph{uncertainty} \cite{miso2022ramp}. To address these challenges, many U.S. Independent System Operators (ISOs) have adopted Look-Ahead Economic Dispatch (LAED) and/or Ramp Products (RPs)\footnote{Different ISOs use varying terms for these products, such as \emph{Flexible Ramping Products} or \emph{Ramping Capability Products}. For simplicity, we use the general term \emph{Ramp Products} throughout this paper.} in their dispatch policies, enhancing system security against future fluctuations and uncertainties \cite{wang2016ramp,caisolaed}. By anticipating system needs across multiple intervals, look-ahead approaches can proactively secure ramping capability beyond what is achievable with traditional single-interval dispatch.

LAED manages net demand \emph{variation} by extending conventional single-interval economic dispatch to a multi-interval optimization problem. In the past decade, extensive research has improved LAED performance by incorporating uncertainty through robust \cite{lorca2014adaptive}, distributionally robust \cite{Poolla}, stochastic \cite{gu2016stochastic}, and chance-constrained formulations \cite{modarresi2018scenario}, as well as by improving forecasting and scenario generation \cite{Xie2014,zhangefficient,zhangacc}. Parallel efforts have explored market mechanisms to support multi-interval dispatch in deregulated settings \cite{Anupam, guo2021pricing}.

Ramp Products have emerged as an alternative approach by introducing new reserve products to address \emph{variation} and/or \emph{uncertainty} in net demand. Initially introduced by the industry, RPs have been adopted by multiple U.S. ISOs \cite{caiso2010integration}. Prior work has proposed multi-duration RP designs \cite{wang2016ramp}, multi-timing scheduling concepts \cite{marneris2015integrated}, and market refinements such as penalty-based approaches \cite{wang2014flexible} and demand-curve adjustments \cite{chen2022addressing}. For a more comprehensive summary and development history of RPs, refer to \cite{wang2016enhancing}.

Although both LAED and RPs are widely used, prior work largely studies them in isolation, leaving a key operational question insufficiently resolved: \emph{when can RP-based procurement replicate the intertemporal deliverability and dispatch feasibility achieved by LAED, and when does the rolling-window implementation used in real-time operations cause the two approaches to diverge?} This question matters directly for operators choosing between relying more heavily on multi-interval dispatch versus product procurement, and for determining which RP durations are most valuable under sustained ramping conditions.

To address this gap, this paper compares LAED and RPs under net demand \emph{variation} assuming perfect forecasting to isolate the intertemporal ramp-deliverability mechanism. We (i) develop a unified optimization-based modeling framework for LAED and RP co-optimization, (ii) identify missing deliverability features in incomplete RP implementations and propose an enhanced RP formulation with additional intertemporal constraints, (iii) prove that the enhanced RP formulation can match LAED's feasible region at a single time step with the same look-ahead horizon, and (iv) demonstrate via rolling-window analysis and case studies that this equivalence does not generally persist over time due to differing intertemporal objectives and evolving initial conditions.

Ramp capability can be supplied by a broad set of resources beyond conventional thermal units, including battery energy storage systems (BESS), pumped-storage hydropower (PSH), flexible demand response, and aggregated electric vehicles (EVs). While the modeling in this paper abstracts away resource-specific dynamics, the insights on ramp deliverability and product duration selection directly inform how such resources could be procured and compensated through ramp product design. The paper is organized as follows: Section II presents the mathematical modeling of LAED and RPs. Section III discusses limitations of the current RPs and introduces enhanced formulations. Section IV introduces rolling-window dispatch and analyzes feasibility in rolling windows. Section V presents case studies using both 2-generator and 10-generator systems. Section VI concludes with key findings and future work.

\section{Modeling of LAED and Ramp Product}
\subsection{System Description}
Let $\mathcal{N} \doteq \{G_1, G_2, ..., G_N\}$ denote the set of dispatchable resources in the system. For each resource $G_i \in \mathcal{N}$, let $\overline{g}_i$ and $\underline{g}_i$ represent the maximum and minimum power capacities, respectively. Similarly, let $\overline{r}_i$ and $\underline{r}_i$ denote the ramp-up and ramp-down capabilities.

Consider a discrete time horizon $\mathcal{T} := \{1, 2, ..., T\}$. Given a net load profile $\{d(t): t \in \mathcal{T}\}$, the objective is to determine a feasible dispatch policy $\mathcal{G}$ that satisfies both economic efficiency and system security requirements. The dispatch policy must satisfy a set of hard operational constraints, and may be subject to penalties for soft constraint violations.

Let $g_i(t)$ denote the power output of unit $G_i$ at time $t$. Two hard constraints are considered in this model: the power capacity constraint \eqref{pc} and the ramping capability constraint \eqref{rc}.

\begin{subequations} \label{hardconstraints}
\begin{align}
\underline{g}_i \leq g_i(t) \leq \overline{g}_i,  \label{pc} \\
-\underline{r}_i \leq g_i(t) - g_i(t-1) \leq \overline{r}_i.  \label{rc}
\end{align}
\end{subequations}

\noindent \textit{Remark:} For energy-limited resources such as battery storage, energy capacity constraints also need to be considered. These, along with transmission network constraints, will be incorporated in the future work.

Hard constraints are determined by physical laws, while decision-makers aim to satisfy soft constraints based on economic or security considerations. Power balance constraints are typically treated as soft constraints. For instance, some studies propose adjustment policies for power imbalance, such as affine control, to manage potential violations \cite{bertsimas2012adaptive,zhangefficient}. In this paper, the \emph{non-negative} parameter $s(t)$ is introduced to represent the system-wide amounts of load shedding at time $t$, which has initial condition $s(0) = 0$. By incorporating this flexibility on the net demand side, the power balance constraint is relaxed to the following form (\ref{softpower}):

\begin{equation} \label{softpower}
d(t) - s(t) = \sum_i^N g_i(t).
\end{equation}
Renewable curtailment is influenced by various factors, such as oversupply and transmission congestion. To simplify the problem, it is assumed that the system has sufficient ramping-down resources, allowing the renewable curtailment term to be ignored. For instance, the market-clearing price for ramping down reserves has consistently been \$0 \cite{miso2022ramp}.

Let $\rho_s$ represent the penalty associated with load shedding $s(t)$. Because the value of lost load is extremely high, the Independent System Operator (ISO) will generally dispatch expensive offline fast-start resources to balance demand rather than directly shed load. In this paper, the penalty for dispatching offline fast-start resources is set to \$2500/MWh. This value is used to represent the high cost of emergency balancing actions when a ramp shortage occurs to manage sudden net load changes and the commitment and dispatch of offline fast-start units are not fast enough \cite{wang2016ramp}.

\subsection{Operational Security Loss}
With increasing renewable penetration, the net demand curve exhibits steeper ramping slopes, creating significant operational challenges for existing dispatch policies. Since security is \emph{above all else} from the system operators' perspective, we define the following metric to measure the operational security loss of \emph{any} dispatch policy.

\begin{definition} \label{reliabityloss}
[Operational Security Loss] The operational security loss of a dispatch policy $\mathcal{G}$, given the realized net load profile $\{d(t)\}$, is defined  as the total amount of shedding demand during time period $\mathcal{T}$, i.e.
\begin{equation} \label{totalloss}
L(\mathcal{G},\{d(t)\}) := \sum_{t \in \mathcal{T}}  \max\{0, d(t)-\sum_i^N g_i(t) \}.  
\end{equation}
\end{definition}

Many factors can contribute to operational security loss, and in some cases, load shedding cannot be avoided simply by adjusting the dispatch policy, such as when demand exceeds total system generation capacity. Rather than focusing on these inherent limitations, this paper aims to compare the operational security under different dispatch polices, especially LAED and RP solutions. To better understand their differences, we exclude other factors such as transmission constraints.

\subsection{Look-Ahead Economic Dispatch}
The LAED policy can be formulated as (\ref{laed}) for the dispatch decision made at time $t$:

\begin{equation} \label{laed}
\begin{aligned}
\mathcal{G}^{\text{LAED}_t\{W\}}:& \min  \sum_{\tau \in \mathcal{W}}  \{ \rho_s s(t + \tau)  + \sum_{i=1}^N  c_i g_i (t + \tau) \} \\
\text { s.t.}  \quad & \forall i \in \mathcal{N} \text{ and } \tau \in \mathcal{W}: \\
& \underline{g}_i \leq g_i(t + \tau) \leq \overline{g}_i, \\
& -\underline{r}_i \leq g_i(t + \tau) - g_i(t + \tau-1) \leq \overline{r}_i, \\
& d(t + \tau) - s(t + \tau) = \sum_i^N g_i(t + \tau), \\
& s(t + \tau) \geq 0,
\end{aligned}
\end{equation}
where decision variables are $\left\{g_i(t + \tau), s(t+ \tau)\right\}_{\tau \in \mathcal{W}}$. $c_i$ is the cost of dispatchable resource $i$, and  $\mathcal{W} := \{0,...,W\}$ is the look ahead window with size $W$.

\subsection{Ramp Product for Variation}
The LAED policy is used for handling net load \emph{variation} through rolling window multi-interval optimization. Rather than solely relying on new dispatch methods in the energy market, ramp product solutions are designed to tackle the net load \emph{variation} by introducing new reserve products and markets. 

\begin{definition}\label{rampproduct}
[Ramp Capability Product (Variation)] The $W$ duration ramp capability product means it is prepared to handle the future net load variation from $0$ to $W$ intervals.
\end{definition}

After introducing the ramp capability product, one of the challenges is determining the appropriate demand curve, as discussed in \cite{chen2022addressing}. To facilitate comparison with the LAED policy, the ramp product market is first considered under the assumption of perfect forecasting. In this scenario, the entire cleared ramp product is used to manage variations in net demand, with the demand curve represented as a vertical line at the value of the forecasted ramp. Assuming a bid of 0 for the ramp products, the ramp product market can be cleared together with the energy market through co-optimization. In this case, the clearing price of the ramp product is determined by the shadow price associated with the ramping constraint.

For example, if the system \emph{only} has one ramp product with $W$ duration, the dispatch policy at time $t$ can be formulated as (\ref{rp1}):

\begin{equation} \label{rp1}
\begin{aligned}
&\mathcal{G}^{\text{ED}_t\{W\}}: \min  \quad  \sum_{i=1}^N  c_i g_i (t) + \sum_{\tau \in \mathcal{W}}  \rho_s s(t + \tau) \\
& \text { s.t.}  \quad  \forall i \in \mathcal{N} \text{ and } \tau \in \mathcal{W}: \\
& -\underline{r}_i \leq g_i(t) - g_i(t-1) \leq \overline{r}_i, \\
& \underline{g}_i \leq g_i(t) + r_i^{+}(t+W) \leq \overline{g}_i, ~ \underline{g}_i \leq g_i(t) - r_i^{-}(t+W) \leq \overline{g}_i, \\
& 0 \leq r_i^{+}(t+W) \leq W*\overline{r}_i, \quad 0 \leq r_i^{-}(t+W) \leq W*\underline{r}_i, \\
& \sum_{i=1}^N r_i^{+}(t+W) \geq  (d(t+W) - s(t+W)) - (d(t) - s(t)), \\
& \sum_{i=1}^N r_i^{-}(t+W) \geq  (d(t) - s(t)) - (d(t+W) - s(t+W)), \\
& d(t)- s(t)  = \sum_i^N g_i(t), \quad s(t + \tau) \geq 0
\end{aligned}
\end{equation}
where decision variables are $\left\{g_i(t), s(t+ \tau)_{\tau \in \mathcal{W}}\right\}$ and $\{r_i^{+}(t+W),r_i^{-}(t+W)\}$.

\section{Missing Capabilities in Ramp Products}
\subsection{The Limits of Incomplete Ramp Products}

The key distinction between LAED and the ramp product lies in their approach to managing net load variation. LAED addresses net load variation across different time intervals through a single multi-interval optimization in a rolling-window way, while the ramp product approach relies on introducing multiple ramping products to manage net load variation.

In current U.S. markets, ramp products are often designed for short-duration events. For example, MISO’s RP procures ramping capability only 10 minutes ahead. However, a single-duration ramp product cannot adequately capture net load variation across multiple intervals. As demonstrated in \cite{wang2016ramp}, a 10-minute ramp product requires imposing ramping constraints at $t+5$ minutes to be effective.

Moreover, even with ramp products of \emph{all} possible durations, RP can not achieve the identical dispatch feasible region as provided by LAED, where two main constraints are missing to secure the ramping capability for the future.

\subsection{Lack of Ramp Increment Control}
To achieve the same feasible region as the LAED policy at time $t$, an additional ramp increment constraint is required in the ramp product solution. This constraint ensures that the quantity of long-duration ramping products is no less than that of shorter-duration ramping products. The \emph{ramp increment constraints} for both ramp-up and ramp-down reserves are as (\ref{rampincrement}), $\forall i \in \mathcal{N}, \tau \in \mathcal{W}, \tau > 0$:

\begin{equation} \label{rampincrement}
  \begin{aligned}
  & 0 \leq r_i^{+}(t+\tau) -  r_i^{+}(t+\tau-1) \leq \overline{r}_i, \\
  & 0 \leq r_i^{-}(t+\tau) -  r_i^{-}(t+\tau-1) \leq \underline{r}_i.
  \end{aligned}
\end{equation}

The introduction of the ramp increment constraint is primarily aimed at ensuring that the system maintains sufficient ramping capability during non-monotonic net load variations. For instance, in the ramp event shown in Fig. \ref{fig:increment}, the ramp-up reserve product with a duration of $W=2$ (40 MW/10min) will be less than the ramp-up reserve product with a duration of $W=1$ (80 MW/5min) if the ramp increment constraint is not applied at time 0:00.

\begin{figure}[!h]
\centering
\includegraphics[width=0.8\columnwidth]{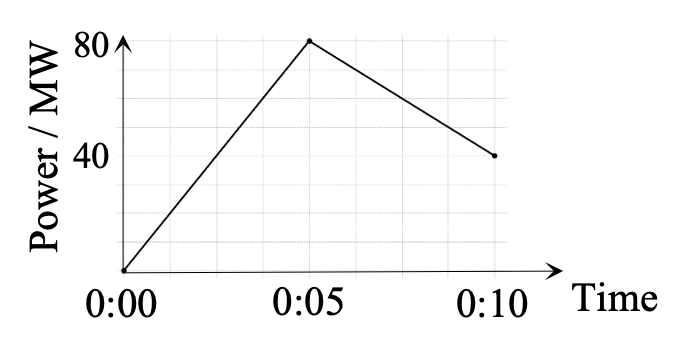}
\caption{Illustration of missing ramp increment control in ramp products}
\label{fig:increment}
\end{figure}

\subsection{Lack of Rolling Difference Enforcement}
Moreover, feasibility must hold not only between $t$ and $t+\tau$ but also across intermediate steps. The rolling difference constraints enforce that the \emph{incremental} ramp capability between successive horizons covers the incremental net-load change, which are shown as (\ref{rollingdifference}), $\forall \tau \in \mathcal{W}, \tau > 0$.

\begin{equation} \label{rollingdifference}
  \begin{aligned}
 & \sum_{i=1}^N r_i^{+}(t+\tau) -  \sum_{i=1}^N r_i^{+}(t+\tau-1)  \\
 \geq &d(t+\tau) - s(t+\tau) - (d(t+\tau-1)- s(t+\tau-1)), \\
 & \sum_{i=1}^N r_i^{-}(t+\tau) -  \sum_{i=1}^N r_i^{-}(t+\tau-1)\\
  \geq &d(t+\tau-1)- s(t+\tau-1) - (d(t+\tau )- s(t+\tau )).
  \end{aligned}
\end{equation}

The rolling difference constraint is necessary to ensure the system’s ramping capability when future net load variations are steeper than the current ones. For instance, as illustrated in Fig. \ref{fig:difference}, the ramp-up reserve products with the duration of $W=\{1, 2\}$ (10 MW/5min, 80 MW/10min) are insufficient to cover the steeper ramp during the second interval (70 MW/5min) if the rolling difference constraint is not applied.

\begin{figure}[!h]
\centering
\includegraphics[width=0.8\columnwidth]{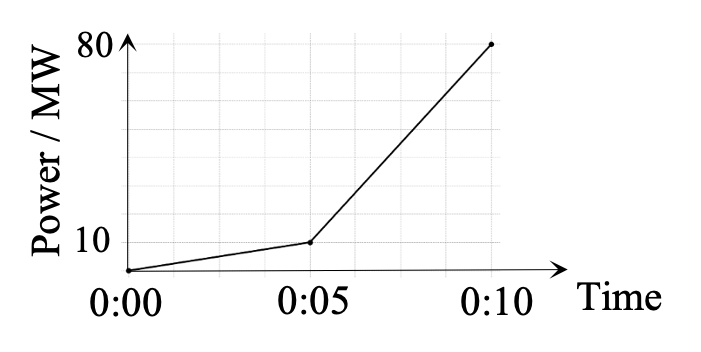}
\caption{Illustration of missing rolling difference enforcement in ramp products}
\label{fig:difference}
\end{figure}

This limitation becomes particularly impactful in systems with high renewable penetration, where net load changes can be both rapid and sustained. The current ramp product framework needs to be enhanced to account for these rolling differences and ensure that the system can maintain the required ramping capability over extended periods.

\subsection{Enhanced Ramp Products}
\begin{equation} \label{rp3}
    \begin{aligned}
    & \mathcal{G}^{\text{ED}'_t\{1,...,W\}}: \min  \quad \sum_{i=1}^N  c_i g_i (t) + \sum_{\tau \in \mathcal{W}}  \rho_s s(t + \tau) \\
    &\text { s.t.}  \quad  \forall i \in \mathcal{N} \text{ and } \tau \in \mathcal{W}: \\
    & -\underline{r}_i \leq g_i(t) - g_i(t-1) \leq \overline{r}_i, \\
    & \underline{g}_i \leq g_i(t) + r_i^{+}(t+\tau) \leq \overline{g}_i, \underline{g}_i \leq g_i(t) - r_i^{-}(t+\tau) \leq \overline{g}_i, \\
    & 0 \leq r_i^{+}(t+\tau) \leq \tau*\overline{r}_i, \quad 0 \leq r_i^{-}(t+\tau) \leq \tau*\underline{r}_i, \\
    & \sum_{i=1}^N r_i^{+}(t+\tau) \geq  d(t+\tau) - s(t+\tau) - (d(t) - s(t)), \\
    & \sum_{i=1}^N r_i^{-}(t+\tau) \geq  d(t) - s(t) - (d(t+\tau) - s(t+\tau)), \\
    & d(t) - s(t)  = \sum_i^N g_i(t), \quad s(t + \tau) \geq 0 , \\
    & \forall \tau \in \mathcal{W}, \tau > 0: \eqref{rampincrement}, \eqref{rollingdifference}
    \end{aligned}
\end{equation}

The above missing capabilities in ramp products can be enhanced by introducing ramp products with all possible durations, incorporating the ramp increment constraints \eqref{rampincrement} and the rolling difference constraints \eqref{rollingdifference}. Then, the enhanced ramp products solution can be formulated as (\ref{rp3}).

\noindent\textit{Remark}: The enhanced RP formulation introduces $O(NW)$ ramp variables and linear constraints, which is the same order as the standard $W$-step LAED formulation. The ``all durations'' construction is used to establish equivalence and to diagnose missing deliverability features in incomplete RP implementations.

In the next section, the enhanced ramp product solution will be evaluated under single-time-step and rolling-window dispatch schemes. 

\section{Rolling Window Dispatch}
In this section, we will show that the enhanced ramp products solution might achieve the same feasible region, or operational security level, as LAED at a single time step, but has no guarantee in rolling window dispatch.

\subsection{Feasibility at a Single Time Step}
\begin{proposition}\label{feasible_t}
[Equivalence of Feasible Regions at a Single Time Step] The feasible region of the \emph{enhanced} ramp products \eqref{rp3} is equivalent to that of the LAED \eqref{laed} with the same look-ahead horizon, provided that the initial generation dispatch values satisfy $g_i(t-1)$ being identical in both formulations for all $i \in \mathcal{N}$. \end{proposition}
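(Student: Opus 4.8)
The plan is to prove that the two feasible regions coincide by establishing mutual inclusion after projecting each formulation onto the variables they share, namely the current dispatch $(g_i(t))_{i\in\mathcal{N}}$ together with the shedding profile $\{s(t+\tau)\}_{\tau\in\mathcal{W}}$. Because the hypothesis fixes $g_i(t-1)$ to be common to both models, the $\tau=0$ ramping constraint $-\underline{r}_i\le g_i(t)-g_i(t-1)\le\overline{r}_i$, the capacity bounds at $t$, and the balance equation $d(t)-s(t)=\sum_i g_i(t)$ are literally identical in \eqref{laed} and \eqref{rp3}; the only real work is to match the intertemporal portions of the two sets.

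For the inclusion LAED $\subseteq$ enhanced RP, I would start from a feasible LAED trajectory $\{g_i(t+\tau)\}$ and define the ramp products by the running envelope of the trajectory relative to its starting point,
\[
r_i^{+}(t+\tau):=\max_{0\le k\le\tau} g_i(t+k)-g_i(t),\qquad
r_i^{-}(t+\tau):=g_i(t)-\min_{0\le k\le\tau} g_i(t+k),
\]
retaining the same $g_i(t)$ and $s(\cdot)$. The envelope is built so that nonnegativity, the duration caps $r_i^{+}(t+\tau)\le\tau\overline{r}_i$ and $r_i^{-}(t+\tau)\le\tau\underline{r}_i$ (by telescoping the per-interval ramp bounds), the capacity constraints $g_i(t)+r_i^{+}\le\overline{g}_i$ and $g_i(t)-r_i^{-}\ge\underline{g}_i$, and the ramp-increment constraints \eqref{rampincrement} (successive envelope increments lie in $[0,\overline{r}_i]$ and $[0,\underline{r}_i]$) all follow almost mechanically. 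The aggregate sufficiency constraints hold because the envelope dominates the net displacement; after substituting $\sum_i g_i(t+k)=d(t+k)-s(t+k)$, the rolling-difference constraints \eqref{rollingdifference} then reduce to asking that the aggregate net-load envelope increment dominate the aggregate net-load increment at each step.

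For the reverse inclusion, I would lift a feasible RP point to a LAED trajectory via $g_i(t+\tau):=g_i(t)+r_i^{+}(t+\tau)-r_i^{-}(t+\tau)$, after trimming the products so that the aggregate identity $\sum_i(r_i^{+}-r_i^{-})=\sum_i g_i(t+\tau)-\sum_i g_i(t)$ holds with equality. Capacity then follows directly from the RP capacity bounds together with nonnegativity; the ramp-increment constraints \eqref{rampincrement} are precisely what force each reconstructed increment $g_i(t+\tau)-g_i(t+\tau-1)$ into $[-\underline{r}_i,\overline{r}_i]$; and the rolling-difference constraints \eqref{rollingdifference} are precisely what make the balance equation reproducible at every interior interval rather than only at the endpoint $t+W$.

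I expect the main obstacle to be the rolling-difference verification in the LAED $\to$ RP direction, and dually the recovery of exact power balance in the RP $\to$ LAED direction, for non-monotone within-window net load. When a unit moves down and then back up inside the horizon, the envelope products decouple from the realized trajectory, so the \emph{per-unit} rolling-difference inequality can fail and only the aggregated inequality survives. Bridging this gap is the crux: it requires either a structural monotonicity assumption on the within-window net-load change, or a careful aggregate-to-per-unit redistribution argument showing that it is the shared immediate constraints—not the intermediate path—that pin down the projected feasible region. Pinning down exactly which projection is preserved (the implemented dispatch $(g_i(t))_i$) is both what makes the equivalence hold and what foreshadows the rolling-window divergence developed in the next subsection.
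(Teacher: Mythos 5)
Your two constructions are exactly the paper's: the paper defines the ramp variables by the recursion $r_i'^{+}(t+\tau)=\max\{r_i'^{+}(t+\tau-1),\,g_i(t+\tau)-g_i(t)\}$, which unrolls to your running envelope $\max_{0\le k\le\tau}g_i(t+k)-g_i(t)$, and it uses the same affine reconstruction $g_i'(t+\tau)=g_i(t)+r_i^{+}(t+\tau)-r_i^{-}(t+\tau)$ for the converse. The difference is that the paper dismisses both verifications with ``it is easy to verify,'' whereas you correctly isolate the two steps that do \emph{not} go through: the rolling-difference constraints \eqref{rollingdifference} for the envelope, and exact power balance for the reconstruction. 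So your proposal is incomplete by your own admission, but the hole you flagged is genuine --- and it sits, unacknowledged, in the paper's own proof.

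The substantive issue is that your hoped-for ``aggregate-to-per-unit redistribution'' repair cannot close the gap, because for non-monotone within-window net load the \emph{aggregate} inequality already fails and the claimed equivalence itself breaks down. Concretely, take one generator with $\underline{g}=0$, $\overline{g}=12$, $\overline{r}=\underline{r}=10$, $g(t-1)=0$, $W=3$, and net load $d(t+\tau)=(0,10,5,12)$ for $\tau=0,\dots,3$. The trajectory $g=(0,10,5,12)$ with $s\equiv 0$ is feasible for \eqref{laed}. In \eqref{rp3}, however, the balance equation and $\underline{g}=0$ force $g(t)=0$, $s(t)=0$; then the rolling-difference constraint at $\tau=1$ gives $r^{+}(t+1)\ge 10-s(t+1)$, the increment constraint \eqref{rampincrement} gives $r^{+}(t+2)\ge r^{+}(t+1)$, and the rolling-difference constraint at $\tau=3$ gives $r^{+}(t+3)\ge r^{+}(t+2)+7-s(t+3)+s(t+2)$, so $r^{+}(t+3)\ge 17-s(t+1)-s(t+3)$, while the capacity constraint caps $r^{+}(t+3)\le 12$. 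Hence every feasible point of \eqref{rp3} has $s(t+1)+s(t+3)\ge 5$: constraints \eqref{rampincrement}+\eqref{rollingdifference} force the ramp-up product to cover the \emph{total positive variation} of net load ($10+7=17$ MW), whereas LAED only needs headroom for the maximum displacement ($12$ MW). No redistribution across units can evade this, since the requirement is already aggregate. Your alternative fix --- a structural monotonicity assumption on the within-window net-load change, so that total positive variation equals maximum displacement --- is the right one; note that even then the envelope should be taken after reallocating the LAED trajectory so each unit moves monotonically, since individual units may still wiggle under a monotone aggregate load.
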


\begin{proof}
The proof includes two parts: 1) the feasible region of ramp product solution (\ref{rp3}) is a subset of the feasible region of LAED (\ref{laed}); 2) the feasible region of LAED (\ref{laed}) is a subset of the feasible region of ramp product solution (\ref{rp3}). Let $\mathcal{R}^{\text{ED}_t'\{1,...,W\}}$ and $\mathcal{R}^{\text{LAED}_t\{W\}}$ be the feasible region of (\ref{rp3}) and (\ref{laed}), respectively.

1) $\mathcal{R}^{\text{ED}_t'\{1,...,W\}} \subseteq \mathcal{R}^{\text{LAED}_t\{W\}}$  

Given a feasible solution of LAED (\ref{laed}), denoted as ${g_i(t+\tau), s(t+\tau)}$, we define the corresponding auxiliary ramp variables as:
\begin{equation} \label{eq:ramp_definitions}
  \begin{aligned}
\text{for} \quad & \tau = 1: \\
&r_i'^{+}(t+\tau) = \max\{0, g_i(t+\tau) - g_i(t)\}, \\
&r_i'^{-}(t+\tau) = \max\{0, g_i(t) - g_i(t+\tau)\}. \\
\text{for} \quad & \tau > 1: \\
&r_i'^{+}(t+\tau) = \max\{r_i'^{+}(t+\tau -1), g_i(t+\tau) - g_i(t)\}, \\
&r_i'^{-}(t+\tau) = \max\{r_i'^{-}(t+\tau -1), g_i(t) - g_i(t+\tau)\}.
\end{aligned}
\end{equation}
Because of the original constraints in (\ref{laed}), it is easy to verify that ${r_i'^{+}(t+\tau), r_i'^{-}(t+\tau)}$ satisfies all the constraints in (\ref{rp3}). Therefore, $\mathcal{R}^{\text{ED}_t'\{1,...,W\}} \subseteq \mathcal{R}^{\text{LAED}_t\{W\}}$.

2) $ \mathcal{R}^{\text{LAED}_t\{W\}} \subseteq \mathcal{R}^{\text{ED}_t'\{1,...,W\}} $

Conversely, given a feasible solution from the ramp-product formulation (\ref{rp3}), we reconstruct the auxiliary generation dispatch sequence as:
\begin{equation} \label{eq:g_reconstruction}
  \begin{aligned}
  g_i'(t+\tau) = g_i(t) + r_i'^{+}(t+\tau) - r_i'^{-}(t+\tau).
  \end{aligned}
\end{equation}
Similarly, it is easy to verify that ${g_i'(t+\tau), s(t+\tau)}$ satisfies all the constraints in (\ref{laed}). Therefore, $\mathcal{R}^{\text{LAED}_t\{W\}} \subseteq \mathcal{R}^{\text{ED}_t'\{1,...,W\}}$.
\end{proof}

The enhanced ramp product formulation introduces additional constraints compared to the single ramp product in order to ensure future ramp deliverability. Based on Proposition \ref{feasible_t}, the enhanced ramp product and LAED may share the same feasible region, but they can yield different optimal solutions due to differences in their objective functions.

\subsection{Feasibility in Rolling Windows} \label{feasibleroll}
As illustrated in Fig. \ref{rolling}, the real-time dispatch is typically solved in a rolling-window way, which means only the dispatch decision of the first interval ($\tau = 0$) will be used as a binding instruction, while the rest $W$ intervals are advisory instructions. This is because forecasts and system conditions are updated frequently, and committing to a full multi-interval trajectory can be infeasible or inefficient after new information arrives.  
\begin{figure}[H]
\centering
\includegraphics[scale=0.55]{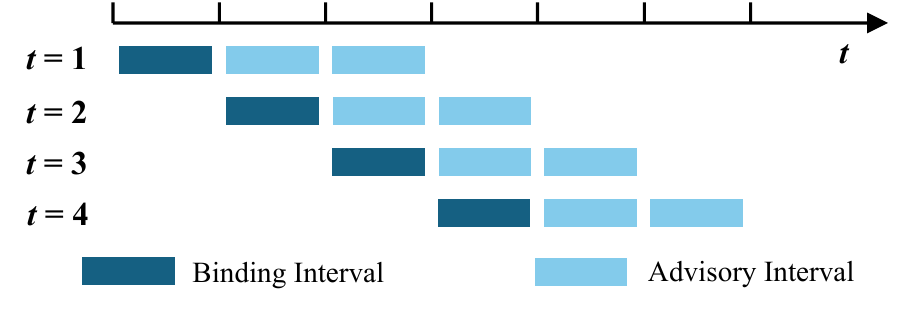}
  \caption{The rolling-window dispatch with window size $W=2$}
  \label{rolling}
\end{figure}
In rolling windows, the decision variables should be $\left\{g_{it}(t + \tau), s_t(t+ \tau)\right\}$ to represent the decision made at time $t$, but the subscript $t$ is ignored in this paper for simplicity.

The equivalence of feasible regions at a single time step relies on the assumption that the previous generator dispatch values are identical in both LAED and ramp product formulations. However, this assumption is difficult to maintain in a rolling window implementation.

\begin{proposition}\label{feasible_r}
[Feasible Regions in Rolling Windows] The feasible region of \emph{any} ramp product formulation has no guarantee of being equivalent to LAED in rolling windows.
\end{proposition}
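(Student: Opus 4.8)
The plan is to prove the statement by exhibiting a counterexample, which is all that a ``no guarantee'' claim requires. The starting point is the observation that the equivalence in Proposition~\ref{feasible_t} is \emph{conditional}: it holds only when the anchoring dispatch $g_i(t-1)$ coincides in both formulations. In rolling-window operation this anchor is not exogenous --- it is the binding first-interval dispatch $g_i(t)$ inherited from the previous window. Hence the whole argument reduces to showing that LAED and a ramp product formulation can select \emph{different} binding dispatches at a single step even when their feasible regions at that step are identical, so that at the following window the hypothesis of Proposition~\ref{feasible_t} is violated and its conclusion fails.

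The mechanism driving the divergence is the difference in the two objectives, and it is crucial that this is unrelated to any missing feasibility constraints, so the argument applies even to the \emph{enhanced} RP whose feasible region equals LAED's. LAED minimizes $\sum_{\tau \in \mathcal{W}}\{\rho_s s(t+\tau) + \sum_i c_i g_i(t+\tau)\}$, internalizing the generation cost of the \emph{entire} look-ahead trajectory, whereas the RP objective in \eqref{rp1} and \eqref{rp3} charges generation cost only at the current interval, $\sum_i c_i g_i(t)$, and treats ramp capability as a free (zero-bid) reserve. Consequently LAED has an incentive to \emph{pre-position} generation within the first interval to economize over the window, while the RP clearing merely verifies that enough costless reserve exists and leaves the first-interval dispatch at its cheapest myopic value. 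I would formalize this by noting that over a common feasible region the two linear programs carry different cost vectors, so their minimizers --- and in particular the binding component $g_i(t)$ --- generically differ.

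To make this concrete I would construct an explicit two-generator instance: a cheap, slow-ramping unit $G_1$ and an expensive, fast-ramping unit $G_2$, together with an anticipated sustained net-load ramp over the window. With a common anchor $g_i(t-1)$, Proposition~\ref{feasible_t} guarantees identical feasible regions at step $t$, yet LAED pre-ramps $G_1$ in interval $t$ to avoid relying on expensive $G_2$ later, whereas the RP solution keeps $G_1$ flat (since future dispatch carries no cost and the reserve requirement is met at zero bid). The two policies therefore leave step $t$ with different values of $g_1(t)$. Because the ramp constraint at the next window, $-\underline{r}_i \le g_i(t+1) - g_i(t) \le \overline{r}_i$, is anchored at precisely this inherited value, the two feasible regions at step $t+1$ are translated relative to one another; choosing the ramp limits and the load step so that the gap between the two anchors straddles a binding ramp bound makes a dispatch admissible under one anchor infeasible under the other, establishing that the regions are not equivalent.

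The main obstacle is this last tuning step. It is not enough that the inherited anchors differ --- the divergence must be large enough, and aligned with an \emph{active} ramp or capacity limit, to produce a point lying in one feasible region but not the other, rather than merely a different optimum over the same set. I would therefore size $\overline{r}_1$, the capacity bounds, and the magnitude of the second-interval load change so that one formulation is driven hard against a ramp bound at $t+1$ while the other retains slack, which cleanly separates the two regions and completes the counterexample.
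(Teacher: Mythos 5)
Your overall strategy is the same one the paper uses, and in fact more ambitious: the paper's own proof of Proposition~\ref{feasible_r} is purely qualitative --- it observes that LAED prices the whole look-ahead trajectory while the RP formulation prices only the first interval, concludes that the binding dispatches can therefore diverge, and hence that the inherited anchors $g_i(t-1)$ (and with them the next-window feasible regions) need not coincide. You identify exactly this mechanism, and you correctly see that a rigorous proof of a ``no guarantee'' claim should be closed out with an explicit counterexample rather than left as a plausibility argument. Your final point --- that once the anchors differ by a nonzero amount, the ramp constraints $-\underline{r}_i \le g_i(t+1)-g_i(t) \le \overline{r}_i$ are boxes centered at different points, so the two regions at $t+1$ differ --- is also fine, and in fact easier than you fear: no delicate ``straddling'' is needed so long as the ramp bounds are not vacuous.

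The genuine gap is in the counterexample itself: the two-generator, cheap-slow/expensive-fast instance you describe cannot be tuned to produce the required first-interval divergence. Your premise that the RP solution ``keeps $G_1$ flat (since future dispatch carries no cost)'' misreads the RP objective in \eqref{rp1}/\eqref{rp3}: it \emph{does} include current-interval generation cost $\sum_i c_i g_i(t)$ and \emph{does} penalize future load shedding $\rho_s s(t+\tau)$; what it omits is only future \emph{generation} cost. With two units, power balance pins $g_1(t)+g_2(t)=d(t)-s(t)$, so the only freedom at $t$ is the split, and over the common feasible region (guaranteed by Proposition~\ref{feasible_t}) pushing the cheap unit as high as possible is simultaneously the myopic optimum \emph{and} the future-cost optimum: raising $g_1(t)$ only enlarges the cheap unit's reachable set upward at $t+1$, and any feasibility-driven cap on $g_1(t)$ binds both formulations identically. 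So LAED and the enhanced RP select the same binding dispatch, and no divergence occurs. To get a working counterexample you need a third cost tier --- e.g., cheap-fast $G_1$, mid-cost-slow $G_2$, expensive-fast $G_3$ --- so that LAED sacrifices current cost (backing off $G_1$ to pre-ramp $G_2$) in order to avoid dispatching $G_3$ later, while the RP formulation, blind to future generation cost and seeing no projected shedding, refuses that sacrifice. (Shedding cannot play the role of the expensive third resource, because the RP objective penalizes future shedding too and would then agree with LAED.) This is consistent with the paper itself: its empirical demonstration of divergence uses the 10-generator system, not the 2-generator one.
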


\begin{proof}
In the LAED formulation, the objective function minimizes the \emph{total} generation cost across all intervals within the look-ahead horizon. In contrast, the ramp product formulation typically minimizes the generation cost only for the first interval of each rolling window. This structural difference can lead to divergent dispatch decisions between the two approaches, resulting in different generator output levels at the end of each window. Consequently, the initial dispatch values $g_i(t-1)$ for the subsequent time step may differ, and the feasible regions of LAED and the ramp product formulation are no longer guaranteed to be the same in a rolling window framework.    
\end{proof}

Proposition \ref{feasible_r} suggests that under the rolling window dispatch scheme, system operators may need to evaluate LAED and RPs through simulations across various scenarios and system conditions. The next section presents some empirical simulations.

\section{Case Study}
The case study begins with a 2-generator system illustrating the risk of operational security loss if dispatched with incomplete ramp products. The 10-generator system is used for comparing the performance of different dispatch polices under a longer ramp-up duration. The dispatch interval is assumed to be 5 minutes in both cases. All dispatch problems are formulated as linear programs and solved using a commercial LP solver (Gurobi). Simulation studies on larger-scale systems and detailed cost analyses are provided in our follow-up work \cite{zhang2026rampscarcity}.

\subsection{2-Generator System}
The 2-generator system used in this paper is consistent with parameters used in prior reports and studies \cite{caiso2016flexible,guo2021pricing}. In this system, G1 and G2 have the same power capacity of 500 MW. They differ in marginal costs, \$25/MWh for G1 and \$30/MWh for G2, and ramping limits of 500 MW/5min and 50 MW/5min, respectively.

The limitation of incomplete ramp product durations was first highlighted in \cite{wang2016ramp}, where additional ramp requirement constraints were proposed to ensure sufficient ramping capability is maintained in response to uncertain net load fluctuations. This section compares a system using only a 10-minute ramp product with one utilizing a 10-minute look-ahead LAED policy under short-term (less than 1 hour) simulation.

\begin{figure}[H]
  \centering
  \includegraphics[scale=0.42]{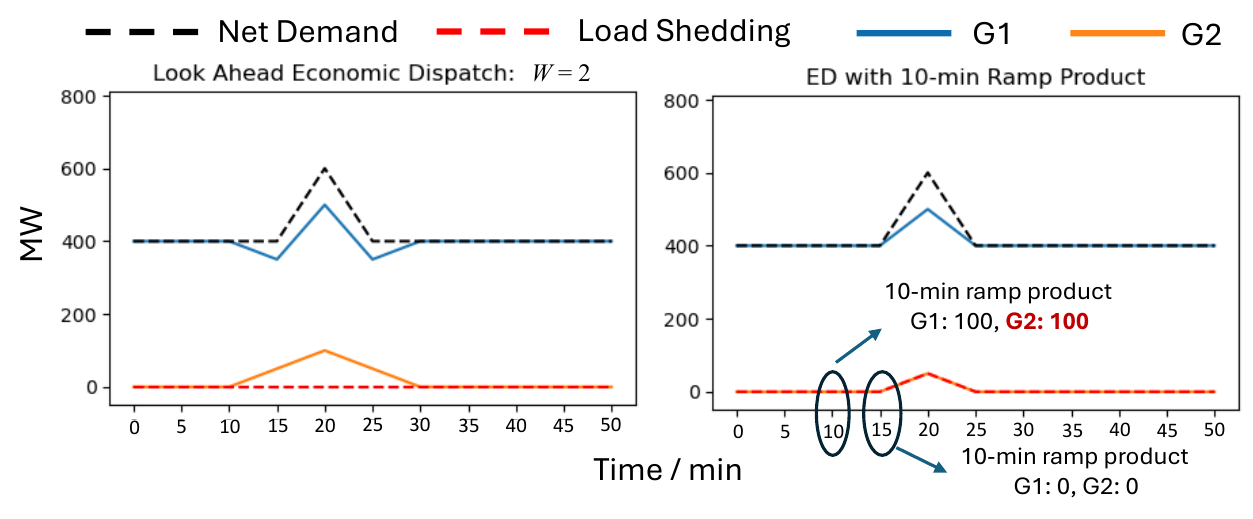}
  \caption{The comparison between the system with only a 10-min ramp product and the system with the 10-min ahead LAED (Scenario 1)}
  \label{limitation1}
\end{figure}

The absence of shorter-duration ramp products can result in operational security loss for systems relying solely on ramp products. We consider a short-term ramping spike scenario of net demand, which is shown as the black dot curve in Fig. \ref{limitation1}. The system with a 10-minute ahead LAED successfully secures the necessary ramping capability by adjusting generation at time $t=15$, proactively anticipating the ramping event. In contrast, the system relying on a 10-minute ramp product fails to avoid load shedding at $t=20$. This occurs because, at $t=15$, it only secures ramping capability for the $[15:25]$ window, which does not account for the sharper ramp needed during $[15:20]$.

This limitation is not restricted to spike events. In general, whenever the ramping requirement exhibits a decreasing trend, a system relying solely on a 10-minute ramp product may fail to maintain adequate ramping capability. As shown in Fig. \ref{limitation2}, the system with only a 10-minute ramp product fails to prevent load shedding at time $t=20$ for a similar reason.

\begin{figure}[H]
  \centering
  \includegraphics[scale=0.42]{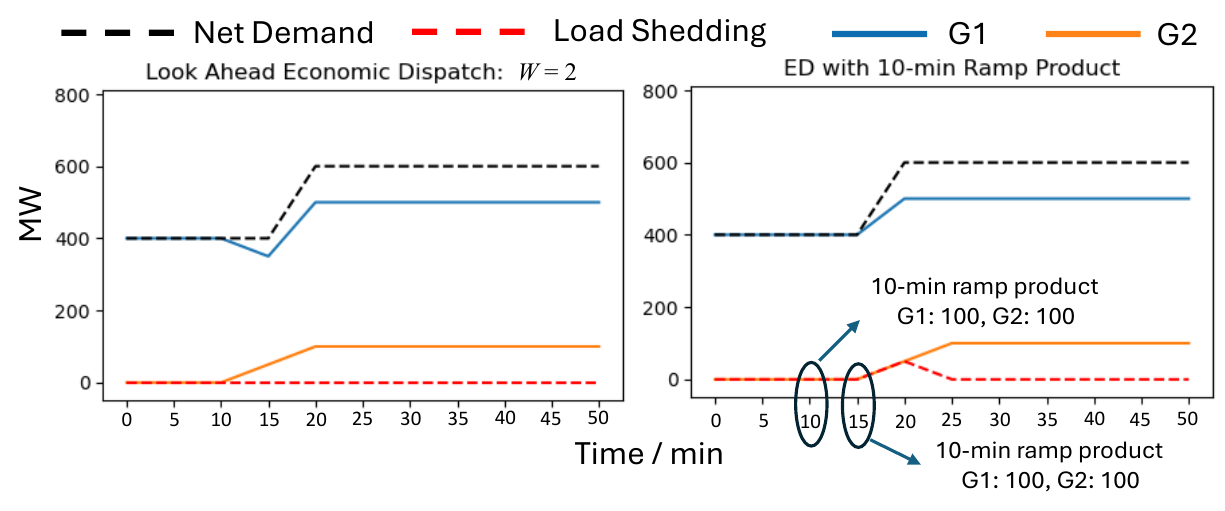}
  \caption{The comparison between the system with only 10-min ramp product and the system with 10-min ahead LAED (Scenario 2)}
  \label{limitation2}
\end{figure}

The operational security loss observed in the ramp product solution can be mitigated by introducing a 5-minute ramp product—even \emph{without} incorporating the ramp increment constraints \eqref{rampincrement} and the rolling difference constraints \eqref{rollingdifference}. For example, in the second scenario, a system equipped with both 5- and 10-minute ramp products achieves the same dispatch results as the 10-minute-ahead LAED, as illustrated in Fig. \ref{limitation3}.

\begin{figure}[H]
  \centering
  \includegraphics[scale=0.55]{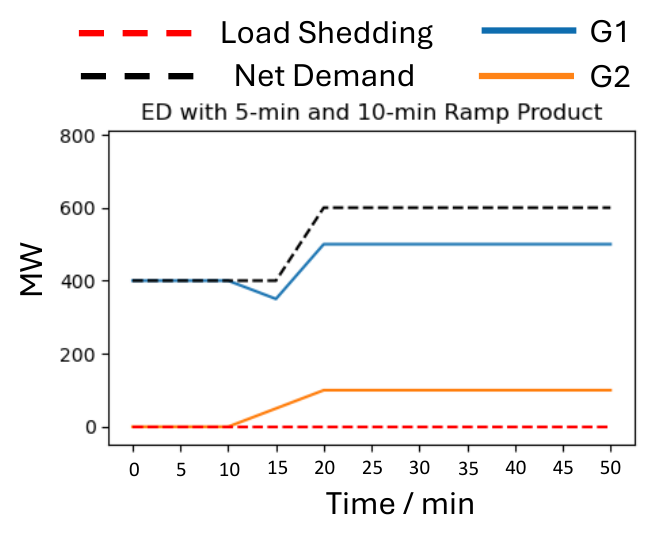}
  \caption{Dispatch results of system with 5- and 10-min ramp products (Scenario 2)}
  \label{limitation3}
\end{figure}

\subsection{10-Generator System}
The 10-generator system is used to further demonstrate the practical value of the reliability dispatch policy. System parameters, including cost, power capacity, and ramp rate limits for each generator, are provided in Table \ref{tab:generator_params}. The total system capacity is 2561 MW, with a maximum total ramp rate of 130 MW per 5-minute interval. To better illustrate the impact of ramping constraints, the ramp limits in this case study are set lower than typical real-world values.

To evaluate the operational security loss during extended ramping events, the estimated monthly average net demand profile for the MISO region in August 2032 is proportionally scaled and adopted as the test scenario \cite{MISO2023AttributesRoadmap}. This profile is illustrated by the red curve in Fig.~\ref{2032MISO}.

\begin{table}[h!]
    \centering
    \caption{Parameters for 10-Generator System}
    \begin{tabular}{c|c|c|c}
        \hline
        \textbf{Generator} & \textbf{\begin{tabular}[c]{@{}c@{}}Cost \\ (\$/MWh)\end{tabular}} & 
        \textbf{\begin{tabular}[c]{@{}c@{}}Power Capacity \\ (MW)\end{tabular}} & 
        \textbf{\begin{tabular}[c]{@{}c@{}}Ramp Limit \\ (MW/5min)\end{tabular}} \\
        \hline
        1  & 185  & 22  & 7.5  \\
        2  & 30   & 170 & 17.5 \\
        3  & 55   & 85  & 7.5  \\
        4  & 15   & 230 & 7.5  \\
        5  & 20   & 613 & 37.5 \\
        6  & 19.5   & 686 & 10 \\
        7  & 48   & 45  & 10 \\
        8  & 60   & 50  & 5  \\
        9  & 57   & 260 & 5  \\
        10 & 50   & 400 & 12.5 \\
        \hline
    \end{tabular}
    \label{tab:generator_params}
\end{table}

\begin{figure}[H]
  \centering
  \includegraphics[scale=0.48]{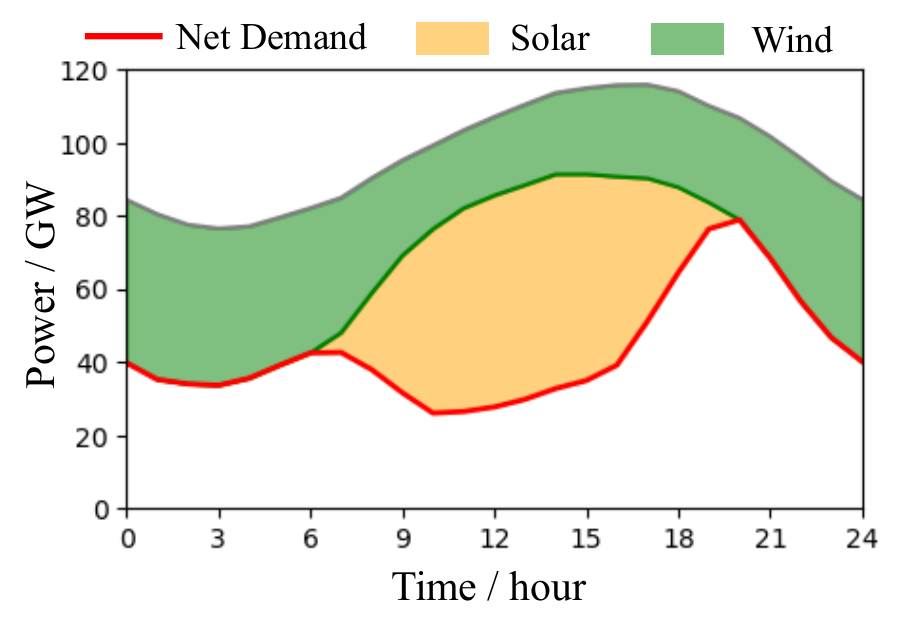}
  \caption{The estimated monthly average daily net demand for the MISO region in August 2032 \cite{MISO2023AttributesRoadmap}}
  \label{2032MISO}
\end{figure}

The load shedding results under 4 different dispatch policies with 10-min look ahead are compared in Table \ref{table_10g_10}, where the system peak demand is equal to the system capacity (2561 MW) when the mean net demand is scaled to 1395 MW. While introducing more ramp products may slightly reduce the load shedding value, the difference is not significant. In larger systems, due to the fundamental difference between LAED and ramp product solutions in the objective function (Proposition \ref{feasible_r}), it is difficult to achieve the same load shedding value across different dispatch policies under a rolling window structure. Even after incorporating the ramp increment and rolling difference constraints into the ramp product solution, the load shedding value remains slightly lower during low demand levels but is noticeably higher during high demand compared to LAED.

\begin{table}[h!]
  \caption{The load shedding values (MWh) of 10-Gen system with different 10-min ahead dispatch polices under different load levels (MW)}
  \label{table_10g_10}    
  \centering
    \begin{tabular}{c|cccc}
    \toprule
    \textbf{Load Level} & \textbf{10-min} & \textbf{10-min} & \textbf{5, 10-min} & \textbf{5, 10-min} \\
    Mean (Peak) & LAED  &RPs &  RPs &  RPs + (\ref{rampincrement}), (\ref{rollingdifference}) \\
    \midrule
    1195 (2194) & 0 & 0 & 0 & 0 \\
    1245 (2286) & 0.28 & 0.09 & \textbf{0} & \textbf{0} \\
    1270 (2332) & 5.03 & 4.83 & \textbf{4.79} & \textbf{4.79} \\
    1295 (2377) & 18.92 & 18.23 & \textbf{18.18} & \textbf{18.18} \\
    1395 (2561) & \textbf{201.08} & 214.62 & 214.62 & 214.62 \\
    \bottomrule                                                         
    \end{tabular}
  \end{table}

After extending the window size to 60 minutes ($W = 12$), the load shedding in both the system with ramp products and the system with LAED is reduced, but they are still not identical, as shown in TABLE  \ref{table_10g_60}. The system with 60-minute LAED experiences no greater load shedding value than any other dispatch method with RPs.


\begin{table}[h!]
  \caption{The load shedding values (MWh) of 10-Gen system with different 60-min ahead dispatch polices under different load levels (MW)}
  \label{table_10g_60}
  \centering
  \begin{tabular}{c|cccc}
  \toprule
  \textbf{Load Level} & \textbf{60-min} & \textbf{60-min} & \textbf{10, 60-min} & \textbf{10, 30, 60-min} \\
  Mean (Peak) & LAED  &RPs &  RPs & RPs  \\
  \midrule
  1195 (2194) & 0 & 0 & 0 & 0 \\
  1245 (2286) & \textbf{0} & 0.26 & 0.09 & 0.09  \\
  1270 (2332) & \textbf{2.71} & 5.39 & 5.17 & 5.17 \\
  1295 (2377) & \textbf{11.45} & 14.84 & 14.84 & 14.84  \\
  1395 (2561) & \textbf{165.14} & 180.46 & 180.46 & 177.12  \\
  \bottomrule                                                          
  \end{tabular}
\end{table}

Under the rolling window scheme, the fundamental difference in objective functions between LAED and ramp product solutions leads to non-identical operational security losses in most 10-generator system scenarios. Nevertheless, our empirical studies show that LAED generally leads to similar or significantly lower load shedding values compared to RPs when using the same look-ahead window size.

\section{Conclusion}
This study offers a comprehensive comparative assessment of LAED and RPs for managing grid flexibility. The findings suggest that LAED typically delivers similar or superior performance compared to RPs in minimizing load shedding, especially with extended look-ahead horizons. This advantage is largely attributed to LAED's capability to incorporate intertemporal constraints across different intervals, allowing for more effective anticipation and management of grid demands.

Our study also highlights the potential of RPs to be enhanced to address longer-duration ramping needs. By considering mid-duration RPs, ramp increments, and rolling difference constraints, RPs can be made more robust and capable of handling the increasing variability associated with net demand. Future work will consider forecasting error and focus on examining the differences in market mechanisms, particularly the dispatch-following incentives of LAED and RPs. Understanding these differences is crucial for designing market structures that effectively integrate these dispatch strategies, ensuring both economic efficiency and operational security.

\section*{Acknowledgment}
\emph{Disclaimer}: The views expressed in this paper are solely those of the authors and do not necessarily represent those of MISO, PJM Interconnection L.L.C. or its Board of Managers.

\bibliographystyle{IEEEtran}
\bibliography{ref}
\end{document}